\newcommand{\cF}{\mathcal{F}}
\newcommand{\cK}{\mathcal{K}}
\newcommand{\cP}{\mathcal{P}}
\newcommand{\IE}{\mathbb{E}}
\newcommand{\IF}{\mathbb{F}}
\newcommand{\IP}{\mathbb{P}}
\newcommand{\IQ}{\mathbb{Q}}
\newcommand{\IR}{\mathbb{R}}
\newcommand{\be}{\begin{eqnarray*}}
\newcommand{\ee}{\end{eqnarray*}}
\newcommand{\ben}{\begin{eqnarray}}
\newcommand{\een}{\end{eqnarray}}
\newtheorem{theorem}{Theorem}[section]
\newtheorem{lemma}[theorem]{Lemma}
\theoremstyle{definition}
\newtheorem{problem}{Problem}[section]
\theoremstyle{remark}
\numberwithin{equation}{section}
\newcommand{\esssup}{\mathop{\mathrm{ess\,sup}}}
\def\ee{\varepsilon}
\begin{document}

\title{Hedging of equity-linked with maximal success factor}

\author{Przemys\l aw Klusik
\footnote{University of Wroc\l aw, pl.\ Grunwaldzki 2/4, 50-384
Wroc\l aw, Poland, E-mail: przemyslaw.klusik@math.uni.wroc.pl}
}
\maketitle

\begin{abstract}
We consider an equity-linked contract whose payoff depends on the
lifetime of policy holder and the stock price. We assume the limited capital for hedging and we provide with the best
strategy for an insurance company in the meaning of so called succes factor
$\IE^\IP\left[{\mathbf 1}_{\{V_T \geq D)}+{\mathbf 1}_{\{V_T <
D\}}\frac{V_T}{D}\right ]$, where $V_T$ denotes the end value of strategy and $D$ is the payoff of the contract.
The work is a genaralisation of the work of  F\"{o}llmer and Schied \cite{FS2004} and  Klusik and Palmowski \cite{KluPal}, but it considers much more general "incompletness" of the market, among others midterm nonmarket information signals and infitite nonmarket scenarios.
\medskip

{\it Keywords:} quantile hedging, equity-linked contract

\medskip

{\it JEL subject classification:} Primary G10; Secondary
G12
\end{abstract}
\section{Introduction}
In this paper we consider an incomplete financial market. On such market the seller of the contract can superhedge it. This is very expensive, but  worst-case-scenario resistant stretegy. Very good example here is the case of unit-linked insurance products, where the payoff, a function of price of some financial index, is paid only if the insured is alive at some specified date. The safest strategy is to be prepared for the case every insured survives, but it costs. The competition forces the insurance companies  to assume some risk, as it is very likely (especially in large cohorts) that some insureds die before the maturity. 

One way of controlling the risk of implemented strategy is so called quantile hedging. Classically we assume some allowed capital and implement the strategy maximizing the probability of satysfying all the claims $E\left[{\mathbf 1}_{\{V_T \geq D\}}\right]$, where $D$ denotes the contingent claim and $V_T$ denotes the final value of hedging portfolio. Equivalently we can fix a probability of successful hedging and look for the cheapest strategy. However this approach can be crititisized because it leaves the unsuccessful scenarios completely out of control. That is why the expected success ratio criterion $E\left[{\mathbf 1}_{\{V_T \geq D\}}+{\mathbf 1}_{\{V_T< D\}}\frac{V_T}{D}\right]$ is also being used.

Foellmer \& Leukert \cite{FL99} investigate the general semimartingale setting. Authors point out the optimal strategy for a complete market with maximal  $E\left[{\mathbf 1}_{\{V_T \geq D\}}\right]$. They also show (just) the existence  of such strategy  maximizing expected successful ratio in incomplete market. The proofs are based on various versions of Neymann-Pearson lemma.

Spivak \& Cvitanic \cite{spivak} study a complete market framework of assets modelled with Ito processes. They constructed a strategy with maximal $E\left[{\mathbf 1}_{\{V_T \geq D\}}\right]$ duality approach to utility maximization. The also implement this technique for market with partial observations. Finally they consider the case where the drift of the wealth process is a nonlinear (concave) function of the investment strategy of the agent.

Sekine \cite{sekine} considers a defaultable securities in an incomplete market, where security-holder can default at some random time and receives a payoff modelled by martingale process. Author shows strategy maximizing the probability of successful hedge.

Klusik, Palmowski, Zwierz \cite{KluPalZwie}  solve the problem of the quantile hedging from the point of view of a better informed agent acting on the market. The additional knowledge of the agent is modelled by a filtration initially enlarged by some random variable. 

The approaches above concentrated on complete market frameworks, which couldn't be used for equity- linked contracts. The payoff of equity-linked policies is function of two random
factors: the price of stock or financial index (hence the term equity-linked),
and some insurance-type (i.e nonmarket) event in the life of the owner of the
contract (death, retirement, survival to a certain date, etc.). As
such, the payoff depends on both financial and insurance risk
elements, which have to be priced so that the resulting premium is
fair to both the seller and the buyer of the contract.
There are very few methods providing an appropriate
risk-management in connection of such a contracts which exploit
some imperfect hedging forms, as the mortality risk makes the market
incomplete. 

The maximum success factor was proposed by Klusik \& Palmowski \cite{KluPal}. They consider the equity-linked product where the insurance event can take a finite numer of states and is independent on financial asset modelled with geometric Brownian motion. They construct optimal strategy for both: maximal probability and maximal expected success ratio. In their framework the knowledge about the insurance event is not revealed before the maturity.

 The equity-linked policies have been studied since the
1970s (see Brennan and Schwartz \cite{bs2}, Boyle and
Schwartz \cite{bs}, and Delbaen \cite{d}). Some later works were  Bacinello \cite{bo}, Aase and Persson \cite{ap},\
Ekern and Persson \cite{e}, Boyle and Hardy \cite{bh}, Bacinello \cite{b}, Moeller \cite{m7}, Melnikov \cite{m1} and \cite{m2}, Melnikov and Romanyuk \cite{m8}. 

In this note we state a general problem of optimizing success  factor $E\left[{\mathbf 1}_{\{V_T \geq D)}+{\mathbf 1}_{\{V_T< D\}}\frac{V_T}{D}\right]$ in a incomplete market, as in  Klusik \& Palmowski \cite{KluPal}, but we allow very general flow of information outside the market. We find an optimal strategy using a geometric approach.

This paper is organized as follows. The Section \ref{main_problem}
introduces a model of financial market and the structure of an
insurance product we consider. We also state and give solution of
both problems of hedging. Iin Section \ref{sec:proofs} we give the
proofs of the main result. To make this section readable we moved one technical proof (proof of Lemma \ref{supremum}) outside, to Appendix.

\section{Mathematical model and investment problem} \label{main_problem}
Consider a discounted price process $X=(X_t)_{t \in [0,T]}$ being a semimartingale on a probability space $(\Omega, \cF, \IP)$ with filtration $\IF^X=(\cF^X_t)_{t \in [0,T]}$. We will assume that there is unique equivalent martingale measure on $\cF^X_T$ denoted by $\IR$.

Consider a filtration $\IF=(\cF_t)_{t \in [0,T]}$ and a fixed sequence $0=t_0<t_1<t_2<\cdots<t_n=T$. We will assume that for $t \in (t_i,t_{i+1})$ holds following equality $\cF_t=\cF_t^X\vee \cF_{t_i}$. The interpretation is following: the knowledge modelled by $\IF$ could be augmented by information outside the market only at times $t_0,t_1,\ldots,t_n$. We will assume that $\cF=\cF_T$.

The augmentation of filtration here could be interpreted as the information signal about non-market variables important to the value of contract. An example here could be the "life" part of information about the equity-linked contract. 

With the slight abuse of notation we extend measure $\IR$ to $\sigma$-field $\cF$:
\begin{equation}
\IR(A):=\int_A\frac{d\IR}{d\IP}\Big{|}_{\cF_T^X}d\IP \nonumber
\end{equation}
for $A \in \cF$.

We will consider the contingent claim $D$ being
an $\cF_T$-measurabe  nonnegative random variable and the
replicating investment strategies, which are expressed in terms of
the integrals with respect to $X$. That is, we will deal with the
self-financing admissible trading strategies $(V_0,\xi)$ where  $V_0$ is constant and $\xi$ is $\IF$-predictable process on
$[0,T]$ for which the value process
$$V_{t}=V_{0}+\int_{0}^{t}\xi_u\;{\rm d}X_u,\qquad t\in[0,T],$$
is well defined and generates non-negative wealth:
$$V_t\geq 0,\qquad \IP-{\rm a.s.}$$
for all $t\in [0,T]$.

Denote the set of all equivalent martingale measures by $\cP$. It means that process $X$ follows a martingale in respect to all measures from $\cP$. 
%
\begin{problem}\label{P2} Fix an initial capital $\tilde{V}_0$. Among all admissible
strategies satisfying $V_0 \leq \tilde{V}_0$ find one that
maximizes {\it expected success ratio}:
\begin{equation}
\IE^\IP\left[{\mathbf 1}_{\{V_T \geq D)}+{\mathbf 1}_{\{V_T <
D\}}\frac{V_T}{D}\right ].
\end{equation}
\end{problem}
Define 
\begin{equation}
f(k)=\esssup_{s \in L^1(\Omega,\cF^X_T)}\left\{s:\IE^{\IR}\left[\frac{d\IP}{d\IR}\frac{1_{\{s\leq D\}}}{D}\Big{|}\cF^X_T\right]\geq k\right\}.
\end{equation}
Assume that constant $k$ is defined by:
\begin{equation}\label{defk}
\IE^{\IR}\left[f(k)\right]=\tilde{V}_0
\end{equation}
\begin{theorem} \label{th_P2}
Solution of Problem \ref{P2} is a super replicating strategy of $\min(D,f(k))$. The maximal success ratio is equal to $\IE^{\IP}\left[\min\left(1,\frac{f(k)}{D}\right)\right]$.
\end{theorem}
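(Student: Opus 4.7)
The approach is to reformulate Problem \ref{P2} in terms of randomised tests (success ratios), reduce the super-hedging constraint to an $\cF_T^X$-measurable budget condition using the enlarged-filtration structure, and then solve the resulting static problem by a pointwise Neyman--Pearson argument. For any admissible $V$ the success ratio $\varphi_V:=\mathbf 1_{\{V_T\ge D\}}+\mathbf 1_{\{V_T<D\}}V_T/D$ lies in $[0,1]$ and satisfies $\varphi_V D=\min(V_T,D)\le V_T$; conversely, any $\cF_T$-measurable $\varphi\in[0,1]$ whose modified claim $\varphi D$ is admissibly super-replicable at initial cost $\le\tilde V_0$ produces a strategy with $\varphi_V\ge\varphi$ pointwise. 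Hence Problem \ref{P2} is equivalent to maximising $\IE^{\IP}[\varphi]$ over $[0,1]$-valued $\cF_T$-measurable $\varphi$ with $\varphi D$ super-replicable at cost $\le\tilde V_0$.

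\emph{Reduction to an $\cF_T^X$-measurable problem.} By the classical super-hedging duality in incomplete markets, $\varphi D$ is super-replicable with $V_0\le\tilde V_0$ iff $\sup_{Q\in\cP}\IE^Q[\varphi D]\le\tilde V_0$. Since $\IR$ is the unique martingale measure on $\cF_T^X$, every $Q\in\cP$ agrees with $\IR$ on $\cF_T^X$ while the $\cF_T^X$-conditional distribution of the non-market information is essentially unrestricted; this is exactly what the announced Lemma \ref{supremum} encodes, yielding
\begin{equation*}
\sup_{Q\in\cP}\IE^Q[\varphi D]=\IE^{\IR}\!\left[\esssup(\varphi D\mid\cF_T^X)\right].
\end{equation*}
Writing $g:=\esssup(\varphi D\mid\cF_T^X)$ for the $\cF_T^X$-measurable envelope (so $\varphi D\le g$ and $\IE^{\IR}[g]\le\tilde V_0$) and observing that, for given $g\ge 0$, the largest admissible $\varphi$ is $\varphi=\min(1,g/D)$, the problem collapses to: maximise $\IE^{\IP}[\min(1,g/D)]$ over $\cF_T^X$-measurable $g\ge 0$ with $\IE^{\IR}[g]\le\tilde V_0$.

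\emph{Pointwise Neyman--Pearson step.} By the tower property and $\cF_T^X$-measurability of $g$, the objective becomes $\IE^{\IR}[F(g,\cdot)]$ with
\begin{equation*}
F(s,\om):=\IE^{\IR}\!\left[\tfrac{d\IP}{d\IR}\min(1,s/D)\,\Big|\,\cF_T^X\right]\!(\om).
\end{equation*}
For each $\om$, $s\mapsto F(s,\om)$ is non-decreasing and concave with right-derivative $\IE^{\IR}[\tfrac{d\IP}{d\IR}\mathbf 1_{\{s\le D\}}/D\mid\cF_T^X](\om)$, a non-increasing function of $s$. Introducing a Lagrange multiplier $k$, the constrained problem reduces to the pointwise maximisation of $F(s,\om)-ks$; the maximiser is the largest $s$ at which the right-derivative still dominates $k$, which by definition is $f(k)(\om)$. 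Equation \eqref{defk} fixes $k$ so that the budget constraint is tight, giving $g^*=f(k)$ and $\varphi^*=\min(1,f(k)/D)$, hence $\varphi^*D=\min(D,f(k))$. Any admissible super-replicator $V$ of $\min(D,f(k))$ with $V_0\le\tilde V_0$ satisfies $\min(V_T,D)\ge\min(f(k),D)$, i.e. $\varphi_V\ge\varphi^*$ pointwise, and the upper bound from the previous steps forces equality in expectation, delivering the value $\IE^{\IP}[\min(1,f(k)/D)]$.

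\emph{Main obstacle.} The technical heart of the argument is the duality identity of the second paragraph (the content of Lemma \ref{supremum}): one must show that the family $\cP$ of equivalent $\IF$-martingale measures is rich enough for every essential pathwise supremum of $\varphi D$ over the non-market $\sigma$-algebra to be attained, in the sup sense, by some element of $\cP$, despite the fact that the enlargement of $\IF^X$ by the information jumps at $t_1,\ldots,t_n$ need not preserve nice martingale-representation properties. The Lagrange step is also mildly subtle because $s\mapsto\min(1,s/D)$ has a kink at $s=D$, so the ``first-order condition'' must be read as a subdifferential inclusion; this is handled cleanly by the essential-supremum definition of $f(k)$.
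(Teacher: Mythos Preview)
Your proposal is correct and follows essentially the same route as the paper. The paper organises the argument into three nested auxiliary problems (Problems~\ref{P2'''}, \ref{P2''}, \ref{P2'}) that correspond exactly to your ``reduction to an $\cF_T^X$-measurable problem'' followed by the ``pointwise Neyman--Pearson step''; its tangent-line inequality $F(s)\le F(f(k))+k(s-f(k))$ in Lemma~\ref{lemmaGamma} is precisely your Lagrange/subdifferential argument, and the final attainment check via the super-replicator of $\min(D,f(k))$ is identical. You also correctly single out Lemma~\ref{supremum} as the technical core.
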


\section{Solution of Problem \ref{P2}} \label{sec:proofs}
Let us begin with auxilary problem:
\begin{problem} \label {P2'''}

 Find a $\cF^X_T$-measurable random variable $\Gamma$ such that maximizes $\IE^{\IP}\left[\min\left(1,\frac{\Gamma}{D}\right)\right]$ subject to condition $\IE^{\IR}\Gamma \leq \tilde{V_0}$.
 \end{problem}
\begin{lemma}\label{lemmaGamma}
A solution of Problem \ref{P2'''} is given by $\Gamma=f(k)$ where $k$ is defined by \ref{defk}.
\end{lemma}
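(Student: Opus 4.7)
The plan is a Lagrangian/subgradient argument exploiting concavity of $\gamma\mapsto\min(1,\gamma/D)$ together with the observation that, by construction of the extended measure $\IR$, the density $d\IR/d\IP$ is $\cF^X_T$-measurable. Accordingly, for every $\cF^X_T$-measurable $\Gamma\ge 0$,
\begin{equation*}
\IE^\IR[\Gamma]=\IE^\IP\!\left[\tfrac{d\IR}{d\IP}\,\Gamma\right],\qquad
\IE^\IP\!\left[\min\!\left(1,\tfrac{\Gamma}{D}\right)\right]=\IE^\IP\bigl[h(\Gamma)\bigr],
\end{equation*}
where $h(\gamma)(\om):=\IE^\IP[\min(1,\gamma/D)\,|\,\cF^X_T](\om)$. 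Since $\gamma\mapsto\min(1,\gamma/D)$ is nondecreasing and concave, conditioning preserves concavity; the left-derivative of $h(\cdot)(\om)$ is $\phi(\gamma)(\om):=\IE^\IP[\mathbf 1_{\{\gamma\le D\}}/D\,|\,\cF^X_T](\om)$, nonincreasing and left-continuous in $\gamma$.

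Next, I would rewrite the conditional expectation in the definition of $f$ to expose this same $\phi$. Because $d\IR/d\IP$ is $\cF^X_T$-measurable, a Bayes computation gives $\IE^\IR[\tfrac{d\IP}{d\IR}\tfrac{\mathbf 1_{\{s\le D\}}}{D}\,|\,\cF^X_T]=\tfrac{d\IP}{d\IR}\,\phi(s)$, so the set appearing in the essential supremum is $\{s:\phi(s)\ge k\,d\IR/d\IP\}$, pointwise in $\om$. From monotonicity and left-continuity of $\phi$ one deduces $\phi(f(k)+)\le k\,d\IR/d\IP\le \phi(f(k))$, i.e.\ $k\,d\IR/d\IP$ belongs to the subdifferential of $h(\cdot)(\om)$ at $f(k)(\om)$. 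On the set $\{f(k)=0\}$ only the right-subgradient condition $\phi(0+)\le k\,d\IR/d\IP$ is available, which still suffices since there $\Gamma-f(k)=\Gamma\ge 0$.

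Concavity of $h(\cdot)(\om)$ with the subgradient $k\,d\IR/d\IP$ at $f(k)$ then yields the pointwise inequality
\begin{equation*}
h(\Gamma)-h(f(k))\le k\,\tfrac{d\IR}{d\IP}\bigl(\Gamma-f(k)\bigr)
\end{equation*}
for every $\cF^X_T$-measurable $\Gamma\ge 0$. Taking $\IE^\IP$ and using (\ref{defk}),
\begin{equation*}
\IE^\IP\!\left[\min\!\left(1,\tfrac{\Gamma}{D}\right)\right]-\IE^\IP\!\left[\min\!\left(1,\tfrac{f(k)}{D}\right)\right]\le k\bigl(\IE^\IR[\Gamma]-\tilde V_0\bigr)\le 0
\end{equation*}
whenever $\IE^\IR[\Gamma]\le\tilde V_0$, which proves that $\Gamma=f(k)$ is optimal for Problem \ref{P2'''}.

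The hard part is not the Lagrangian computation but the measurability and existence issues: one must (i) identify the essential-supremum definition of $f(k)$ with the pointwise $\sup\{s:\phi(s)\ge k\,d\IR/d\IP\}$ through a measurable selection, and (ii) show that $k\mapsto\IE^\IR[f(k)]$ is continuous and monotone so that a unique $k$ solves (\ref{defk}) for every admissible $\tilde V_0$. These are precisely the contents of Lemma \ref{supremum}, deferred to the appendix; modulo them, the concavity/subgradient argument above closes the proof.
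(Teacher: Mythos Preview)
Your argument is essentially the paper's own: both exploit concavity of $s\mapsto F(s):=\IE^\IR\!\bigl[\tfrac{d\IP}{d\IR}\min(1,s/D)\,\big|\,\cF^X_T\bigr]$ (which equals $\tfrac{d\IP}{d\IR}\,h(s)$ in your notation) and the tangent/subgradient inequality $F(\Gamma)\le F(f(k))+k(\Gamma-f(k))$, then take $\IE^\IR$ and use $\IE^\IR[\Gamma]\le\tilde V_0=\IE^\IR[f(k)]$. You in fact give more detail than the paper, which simply asserts the subgradient inequality without explaining why $k$ lies in $\partial F(f(k))$; your computation via $\phi$ and the left-continuity argument fills that gap.

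One factual correction: your final appeal to Lemma~\ref{supremum} is misplaced. That lemma (in the appendix) proves $\sup_{\IQ\in\cP}\IE^\IQ M=\IE^\IR\overline M$ and is used only later, in passing from Problem~\ref{P2'} to Problem~\ref{P2''}; it says nothing about measurability of $f(k)$ or existence of $k$ solving~(\ref{defk}). The paper does not address those technical points anywhere---they are tacitly assumed---so you should not attribute them to Lemma~\ref{supremum}.
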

\begin{proof}
Consider a function
\begin{equation}
F(s):=\IE^{\IR}\left[\frac{d\IP}{d\IR}\min\left(1,\frac{s}{D}\right)\Big{|}\cF^X_T\right]
\end{equation}
it is almost everywhere increasing and concave in respect of $s$, so
\begin{equation}
F(s)\leq F(f(k))+k(s-f(k))
\end{equation}
Put $s=\Gamma$ and take expectation:
\begin{equation}
\IE^{\IR}[F(\Gamma)]\leq \IE^{\IR}[F(f(k))]+k\left(\IE^{\IR}[\Gamma]-\IE^{\IR}[f(k)]\right)
\end{equation}
Finally as $\IE^{\IR}[\Gamma]\leq \tilde{V}_0=\IE^{\IR}[f(k)]$ we get $\IE^{\IR}[F(\Gamma)]\leq \IE^{\IR}[F(f(k))]$, i.e.
\begin{equation}
\IE^{\IP}\left[\min\left(1,\frac{\Gamma}{D}\right)\right]\leq \IE^{\IP}\left[\min\left(1,\frac{f(k)}{D}\right)\right]
\end{equation}
\end{proof}

For random variable $M$ let us introduce the set $\cK^M$ of $\cF_T^X$ -measurable random variables $K$ such that $\IR(M\geq K|\cF_T^X)>0$ almost surely. 
We will use a convention $\overline{M}=\esssup\cK^M$.

Now we solve another auxilary problem:
\begin{problem} \label {P2''}Find a $\cF$-measurable random variable $\phi$ such that $0\leq\phi\leq 1$ and that maximizes $\IE^{\IP}(\phi)$ subject to condition $\IE^{\IR}[\overline{\phi D}]\leq \tilde{V_0}$.
\end{problem}
\begin{lemma} \label{lem_jednamiara}
A solution of Problem \ref{P2''} is  $\phi^*:=\min\left(1,\frac{f(k)}{D}\right)$.
\end{lemma}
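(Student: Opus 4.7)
The plan is to reduce Problem \ref{P2''} to the already-solved Problem \ref{P2'''} by replacing each feasible $\phi$ with the $\cF_T^X$-measurable proxy $\Gamma := \overline{\phi D}$ and then invoking Lemma \ref{lemmaGamma}. First I would establish the key fact that $\overline{M}$ coincides with the $\cF_T^X$-conditional essential supremum of $M$, so in particular $M \leq \overline{M}$ almost surely: any $K \in \cK^M$ exceeding $\esssup(M\mid \cF_T^X)$ on a positive-probability set would force $\IR(M \geq K \mid \cF_T^X)=0$ there, a contradiction, while the reverse inequality follows from approximating $\esssup(M\mid \cF_T^X)$ from below by members of $\cK^M$.

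For the upper bound I will take an arbitrary feasible $\phi$ and set $\Gamma := \overline{\phi D}$. Then $\Gamma$ is $\cF_T^X$-measurable with $\IE^\IR[\Gamma] \leq \tilde{V}_0$, hence admissible in Problem \ref{P2'''}. Combining $\phi D \leq \Gamma$ (from the previous step) with $\phi \leq 1$ gives $\phi \leq \min(1,\Gamma/D)$, so
\begin{equation*}
\IE^\IP[\phi] \leq \IE^\IP\!\left[\min\!\left(1,\frac{\Gamma}{D}\right)\right] \leq \IE^\IP\!\left[\min\!\left(1,\frac{f(k)}{D}\right)\right] = \IE^\IP[\phi^*],
\end{equation*}
where the second inequality is Lemma \ref{lemmaGamma} applied to $\Gamma$.

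For feasibility of $\phi^*$ I will show that $\overline{\phi^* D} = f(k)$. One direction is immediate: $\phi^* D = \min(D,f(k)) \leq f(k)$ and $f(k)$ is $\cF_T^X$-measurable, so $\overline{\phi^* D} \leq f(k)$. For the opposite inequality I will use that the definition of $f(k)$ forces $\IR(f(k) \leq D \mid \cF_T^X) > 0$ almost surely; on that conditional event $\min(D,f(k)) = f(k)$, which places $f(k)$ into $\cK^{\phi^* D}$ and yields $f(k) \leq \overline{\phi^* D}$. Combined with (\ref{defk}) this gives $\IE^\IR[\overline{\phi^* D}] = \IE^\IR[f(k)] = \tilde{V}_0$, proving that $\phi^*$ is feasible and (by the previous paragraph) optimal.

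The hard part will be the careful handling of the operator $\overline{\cdot}$: rigorously identifying $\overline{M}$ with $\esssup(M \mid \cF_T^X)$, and more delicately verifying that $f(k)$ itself lies in $\cK^D$ so that the reverse inequality above can be concluded. The latter requires an approximation by an $L^1$ sequence $s_n$ increasing to $f(k)$ within the set defining $f(k)$, together with a continuity argument ensuring that the relevant conditional probability remains strictly positive; this is the only non-routine step in the proof.
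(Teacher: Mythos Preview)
Your approach is essentially the same as the paper's: set $\Gamma=\overline{\phi D}$ for a feasible $\phi$, invoke Lemma~\ref{lemmaGamma}, and use $\phi D\le\overline{\phi D}$ together with $\phi\le 1$ to conclude. The paper is terser about the inequality $M\le\overline{M}$, which you correctly identify with the $\cF_T^X$-conditional essential supremum.

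One remark: the step you flag as the ``hard part'' --- the reverse inequality $f(k)\le\overline{\phi^* D}$ and hence $f(k)\in\cK^{\phi^* D}$ --- is not needed. Feasibility of $\phi^*$ only requires $\IE^{\IR}[\overline{\phi^* D}]\le\tilde V_0$, and for that the one-sided bound $\overline{\phi^* D}\le f(k)$ (which you already have from $\phi^* D=\min(D,f(k))\le f(k)$ and $\cF_T^X$-measurability of $f(k)$) suffices, together with~\eqref{defk}. The paper proceeds exactly this way and never establishes equality. So your proof is correct but can be shortened by dropping the delicate reverse-inequality argument entirely.
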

\begin{proof}
First, check that $\phi^*$ is in domain of Problem \ref{P2''}:
\begin{equation}
\IE^{\IR}[\overline{\phi^* D}]=\IE^{\IR}\left[\overline{\min\left(1,\frac{f(k)}{D}\right)D}\right]\leq \IE^{\IR} \left[f(k)\right] = \tilde{V_0}.\nonumber
\end{equation}
Second, check that $\phi^*$ maximizes $\IE^{\IP}(\phi)$:

Assume that $\tilde{\phi}$ is some other solution. As $\IE^{\IR}[\tilde{\phi}D]\leq\IE^{\IR}[\overline{\tilde{\phi}D}]\leq\tilde{V}_0$ from Lemma \ref{lemmaGamma} we get: $\IE^{\IP}\phi^*=\IE^{\IP}\left[\min\left(1,\frac{f(k)}{D}\right)\right]\geq \IE^{\IP}\left[\min\left(1,\frac{\overline{\tilde{\phi} D}}{D}\right)\right]$. Further $\IE^{\IP}\left[\min\left(1,\frac{\overline{\tilde{\phi} D}}{D}\right)\right] \geq \IE^{\IP}\left[\min\left(1,\frac{\tilde{\phi} D}{D}\right)\right] \geq  \IE^{\IP}\left[\min\left(1,\tilde{\phi}\right)\right] \geq  \IE^{\IP} \tilde{\phi}$. 
\end{proof}
\begin{problem} \label {P2'} Find a $\cF$-measurable random variable $\phi$ such that $0\leq\phi\leq 1$ and that maximizes $\IE^{\IP}(\phi)$ subject to condition $\IE^{\IQ}[\phi D]\leq \tilde{V_0}$ for all $\IQ \in \cP$ .
\end{problem}
\begin{lemma} \label{lemma_P2'}
A solution of Problem \ref{P2'} is $\phi=\min\left(1,\frac{f(k)}{D}\right)$.
\end{lemma}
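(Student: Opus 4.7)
The plan is to reduce Problem \ref{P2'} to Problem \ref{P2''}, whose optimizer has already been identified in Lemma \ref{lem_jednamiara}. The essential bridge is the duality identity
$$\sup_{\IQ \in \cP} \IE^{\IQ}[\phi D] \;=\; \IE^{\IR}\bigl[\overline{\phi D}\bigr],$$
valid for every $\cF$-measurable $\phi$ with $0 \leq \phi \leq 1$. This identity is the content of Lemma \ref{supremum} whose proof is deferred to the Appendix; once it is available, Problems \ref{P2'} and \ref{P2''} have literally the same constraint set, and hence the same optimal value and optimizer.

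First I would verify that $\phi^*:=\min(1,f(k)/D)$ is itself feasible for Problem \ref{P2'}. Since $\phi^* D=\min(D,f(k))\leq f(k)$ and $f(k)$ is $\cF_T^X$-measurable, while the uniqueness of $\IR$ on $\cF_T^X$ forces $\IQ|_{\cF_T^X}=\IR|_{\cF_T^X}$ for every $\IQ\in\cP$, one obtains
$$\IE^{\IQ}[\phi^* D] \;\leq\; \IE^{\IQ}[f(k)] \;=\; \IE^{\IR}[f(k)] \;=\; \tilde V_0$$
by the defining relation \eqref{defk}. For optimality, let $\phi$ be any feasible element of Problem \ref{P2'}. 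Lemma \ref{supremum} then gives $\IE^{\IR}[\overline{\phi D}]=\sup_{\IQ\in\cP}\IE^{\IQ}[\phi D]\leq \tilde V_0$, so $\phi$ is admissible for Problem \ref{P2''} as well. Lemma \ref{lem_jednamiara} therefore yields $\IE^{\IP}[\phi]\leq \IE^{\IP}[\phi^*]$, finishing the argument.

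The main obstacle is thus entirely concentrated in the duality identity of Lemma \ref{supremum}. One direction, $\sup_{\IQ\in\cP}\IE^{\IQ}[\phi D]\leq \IE^{\IR}[\overline{\phi D}]$, is immediate: condition on $\cF_T^X$, use $\IQ|_{\cF_T^X}=\IR|_{\cF_T^X}$, and bound the resulting conditional expectation by the conditional essential supremum defining $\overline{\phi D}$. The reverse inequality is the technical heart of the Appendix: for each $\eps>0$ one has to exhibit a measure $\IQ_\eps\in\cP$ whose conditional law of the non-market information given $\cF_T^X$ concentrates where $\phi D$ is within $\eps$ of $\overline{\phi D}$, while simultaneously preserving the martingale property of $X$. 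What makes such a construction possible is precisely the discrete structure of the information flow: the augmentation of $\cF^X$ occurs only at the deterministic times $t_0,t_1,\ldots,t_n$, so between consecutive $t_i$ no non-market randomness enters and the martingale property of $X$ under $\IR$ is inherited automatically by any reweighting of the $\cF_{t_i}/\cF_{t_i}^X$ conditional distributions.
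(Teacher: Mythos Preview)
Your proposal is correct and follows exactly the route the paper takes: the paper's proof is the single sentence ``immediate from Lemmas \ref{supremum} and \ref{lem_jednamiara},'' and you have simply spelled out that reduction in detail (with an independent feasibility check and a sketch of the Appendix argument). Nothing substantive differs.
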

\begin{proof}
The proof is imediate from Lemmas \ref{supremum} and \ref{lem_jednamiara}.
\end{proof}

\begin{proof}[Proof of the Theorem \ref{th_P2}]
Take any admissible strategy $(V_0,\xi)$ from the domain of Problem \ref{P2}, i.e such that $V_0\leq \tilde{V_0}$.

For all $\IQ \in \cP$ we have:
\begin{eqnarray}
\tilde{V}_0 &\geq& V_0\nonumber\\
&\geq&\IE^{\mathbb{Q}}V_T\nonumber\\
&\geq&\IE^{\mathbb{Q}}\left[D{\mathbf 1}_{\{V_T\geq D\}}+V_T {\mathbf 1}_{\{V_T < D\}}\right]\nonumber\\
&=&\IE^{\mathbb{Q}}\left[D\phi^{(V_0,\xi)}\right].\label{obl}\nonumber
\end{eqnarray}
where  $\phi^{(V_0,\xi)}={\mathbf 1}_{\{V_T \geq D)}+{\mathbf 1}_{\{V_T < D\}}\frac{V_T}{D}$. Notice that from Lemma \ref{lemma_P2'} we get $\IE^\IP[\phi^{(V_0,\xi)}]\leq \IE^\IP\left[\min\left(1,\frac{f(k)}{D}\right)\right]$, as $\phi^{(V_0,\xi)}$ is in domain of Problem \ref{P2'}. We show that we can choose strategy $(V_0,\xi)$ so that the value $\IE^\IP\left[\min\left(1,\frac{f(k)}{D}\right)\right]$ is attained.

Take  the strategy $(V_0^*,\xi^*)$ being the super replicating strategy of contingent claim $\min\left(D,{f(k)}\right)$. We have $V_0^*=\sup_{\IQ \in \cP}\IE\left[\min\left(D,{f(k)}\right)\right] \leq \tilde{V}_0$ (from Lemma \ref{lemma_P2'}).  We show that for this strategy the value $\IE^\IP\left[\min\left(1,\frac{f(k)}{D}\right)\right]$ is attained:
\begin{eqnarray}
\IE^{\IP}\left[\min\left(1,\frac{f(k)}{D}\right)\right]&\leq& \IE^{\IP}\left[1_{\{V_T^*\geq D\}}+\min\left(1,\frac{f(k)}{D}\right)1_{\{V_T^*< D\}} \right]\nonumber\\
&=&\IE^{\IP}\left[1_{\{V_T^*\geq D\}}+\frac{\min\left(D,{f(k)}\right)}{D}1_{\{V_T^*< D\}} \right]\nonumber\\
&\leq&\IE^{\IP}\left[1_{\{V_T^*\geq D\}}+\frac{V_T^*}{D}1_{\{V_T^*< D\}} \right]\nonumber\\
&=&\IE^{\IP}\left[\phi^{(V^*_0,\xi^*)}\right]
\end{eqnarray}
So $(V_0^*,\xi^*)$ is a solution of Problem \ref{P2}.
\end{proof}
\section{Appendix}
\begin{lemma} \label{supremum} Let $M\in L(\Omega,\cF)$. Then
\begin{equation}
\sup_{\IQ \in \cP} E^{\IQ}M=\IE^{\IR}\overline{M}.
\end{equation}
\end{lemma}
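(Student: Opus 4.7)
The plan is to prove the two inequalities separately. For the upper bound $\sup_{\IQ \in \cP} \IE^{\IQ} M \leq \IE^{\IR}\overline{M}$, I would use that $\IR$ is the unique equivalent martingale measure on $\cF^X_T$, so every $\IQ \in \cP$ restricted to $\cF^X_T$ coincides with $\IR$. The tower property then yields
\[
\IE^{\IQ} M \;=\; \IE^{\IR}\bigl[\IE^{\IQ}[M \mid \cF^X_T]\bigr] \;\leq\; \IE^{\IR}\bigl[\esssup(M \mid \cF^X_T)\bigr],
\]
where the conditional essential supremum does not depend on the choice of measure equivalent to $\IP$. A short verification using the definition of $\cK^M$ identifies this conditional essential supremum with $\overline{M}$: any $K \in \cK^M$ is $\leq \esssup(M \mid \cF^X_T)$ a.s., while $\esssup(M \mid \cF^X_T) - \delta \in \cK^M$ for every $\delta > 0$.

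For the reverse inequality I would exhibit a family $\{\IQ_\epsilon\} \subset \cP$ approximating the supremum. For $\epsilon > 0$ set $A_\epsilon := \{M > \overline{M} - \epsilon\}$ (with obvious truncation when $\overline{M} = \infty$); by the definition of $\overline{M}$, $p_\epsilon := \IR(A_\epsilon \mid \cF^X_T) > 0$ almost surely. Define the nonnegative density $\psi_\epsilon := \mathbf{1}_{A_\epsilon}/p_\epsilon$, which satisfies $\IE^{\IR}[\psi_\epsilon \mid \cF^X_T] = 1$, and let $\IQ_\epsilon$ have density $\psi_\epsilon$ with respect to $\IR$. Since $p_\epsilon$ and $\overline{M}$ are $\cF^X_T$-measurable,
\[
\IE^{\IQ_\epsilon}[M \mid \cF^X_T] \;=\; \frac{1}{p_\epsilon}\,\IE^{\IR}[\mathbf{1}_{A_\epsilon} M \mid \cF^X_T] \;\geq\; \overline{M} - \epsilon,
\]
so $\IE^{\IQ_\epsilon} M \geq \IE^{\IR}\overline{M} - \epsilon$, and the lower bound follows on letting $\epsilon \downarrow 0$.

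The main obstacle is to confirm that each $\IQ_\epsilon$ truly lies in $\cP$: it must be equivalent to $\IP$ on $\cF$ and make $X$ an $\IF$-martingale, not merely an $\IF^X$-martingale. Equivalence is restored by a preliminary mixing step, replacing $\psi_\epsilon$ with $(1-\eta) + \eta\psi_\epsilon$ and taking $\eta \uparrow 1$. The genuine difficulty lies in verifying the $\IF$-martingale property, and this is where the specific structure $\cF_t = \cF^X_t \vee \cF_{t_i}$ for $t \in (t_i,t_{i+1})$ is essential: because $\IE^{\IR}[\psi_\epsilon \mid \cF^X_T] = 1$, the $\IF^X$-dynamics of $X$ under $\IQ_\epsilon$ coincide with those under $\IR$, and one then verifies inductively across the intervals $(t_{i-1}, t_i]$ that the non-market information revealed at each enlargement time $t_i$ does not disturb the martingale property, by conditioning on $\cF^X_T$ and reducing to the known martingale property of $X$ under $(\IR, \IF^X)$. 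Carrying out this step-by-step verification is the technical content that has been relegated to the appendix.
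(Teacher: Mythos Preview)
Your upper bound is fine and matches the paper's one-line argument $\IE^{\IQ}M\le\IE^{\IQ}\overline{M}=\IE^{\IR}\overline{M}$.

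The lower bound, however, has a real gap: your candidate density $\psi_\epsilon=\mathbf 1_{A_\epsilon}/\IR(A_\epsilon\mid\cF^X_T)$ (or its convex mixture with $1$) does \emph{not} in general define a measure in $\cP$ once there is an intermediate enlargement time. The property $\IE^{\IR}[\psi_\epsilon\mid\cF^X_T]=1$ guarantees only that $X$ remains an $\IF^X$-martingale under $\IQ_\epsilon$, not an $\IF$-martingale. The step you defer---``verify inductively across the intervals that the enlargement does not disturb the martingale property''---requires the identity
\[
\IE^{\IR}\bigl[\psi_\epsilon\mid\cF^X_T\vee\cF_{s}\bigr]=\IE^{\IR}\bigl[\psi_\epsilon\mid\cF_{s}\bigr]\qquad\text{for all }s\in[0,T],
\]
and this fails for your $\psi_\epsilon$. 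Take $n=2$, a signal $Y_1\in\{0,1\}$ revealed at $t_1$ and independent of $X$, and $M=Y_1\,g(X_T)+(1-Y_1)\,h(X_T)$; then for $s\in(t_1,T)$ the left-hand side equals $\psi_\epsilon$ itself, which is a genuine function of $X_T$ (through the normalizer $p_\epsilon$) and therefore not $\cF_s$-measurable. A direct computation then shows $\IE^{\IQ_\epsilon}[X_T\mid\cF_s]\neq X_s$.

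What the paper does, and what your sketch is missing, is to replace the single normalization by $\IR(\,\cdot\mid\cF^X_T)$ with a \emph{product} of $n$ factors,
\[
\frac{d\IR^{\alpha,K}}{d\IR}=\prod_{i=1}^n A_i^{\alpha,K},\qquad A_i^{\alpha,K}=\alpha+(1-\alpha)\,\frac{\IR(M\ge K\mid\cF_{t_i})}{\IR(M\ge K\mid\cF^X_{t_i}\vee\cF_{t_{i-1}})},
\]
each normalized at the intermediate level $\cF^X_{t_i}\vee\cF_{t_{i-1}}$ rather than at $\cF^X_T$. This is exactly what forces $\IE^{\IR}[A_i\mid\cF^X_T\vee\cF_s]=1$ for $s<t_i$, so the factors peel off one by one and the displayed identity above holds for the full density. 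Your construction corresponds to the case $n=1$; for $n\ge2$ the telescoping product is the essential missing idea.
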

\begin{proof}
Fix a random variable $M$. For each $\IQ \in \cP$ we have an inequality $E^{\IQ}M\leq E^{\IQ}\overline{M}=E^{\IR}\overline{M}$, so $\sup_{\IQ \in \cP}E^{\IQ}M\leq \IE^{\IR}\overline{M}$ and we need to show just that supremum is attainable.

For any $\alpha \in [0,1]$ and $K \in \cK^M$ define a measure $\IR^{\alpha,K}$
\begin{eqnarray}
\frac{d\IR^{\alpha,K}}{d\IR}=\prod_{i=1}^n A^{\alpha,K}_i
\end{eqnarray}
, where
\begin{eqnarray}
A_i^{\alpha,K}= \alpha+\frac{\IR{(M\geq K|\cF_{t_i})}(1-\alpha)}{\IR(M\geq K|\cF^X_{t_i}\vee \cF_{t_{i-1}} )}
\end{eqnarray}

Note that for any  $s<t_i$ we have
\begin{eqnarray}
\IE^{\IR}[A_i^{\alpha,K}|\cF^X_{T}\vee \cF_s]&=&\IE^{\IR}\left[\alpha+\frac{\IR{(M\geq K|\cF_{t_i})}(1-\alpha)}{\IR(M\geq K|\cF^X_{t_i}\vee \cF_{t_{i-1}} )}\Big{|}\cF^X_{T}\vee \cF_s\right]\nonumber\\
&=&\IE^{\IR}\left[\alpha+\frac{\IR{(M\geq K|\cF_{t_i})}(1-\alpha)}{\IR(M\geq K|\cF^X_{t_i}\vee \cF_{t_{i-1}} )}\Big{|}\cF^X_{t_i}\vee \cF_{t_{i-1}}\right]\nonumber\\
&=&1\nonumber
\end{eqnarray}
and
\begin{eqnarray}
\IE^{\IR}[A_i^{\alpha,K}|\cF_s]&=&\IE^{\IR}\left[\alpha+\frac{\IR{(M\geq K|\cF_{t_i})}(1-\alpha)}{\IR(M\geq K|\cF^X_{t_i}\vee \cF_{t_{i-1}} )}\Big{|}\cF_s\right]\nonumber\\
&=&\IE^{\IR
}\left[\IE^{\IR}\left[\alpha+\frac{\IR{(M\geq K|\cF_{t_i})}(1-\alpha)}{\IR(M\geq K|\cF^X_{t_i}\vee \cF_{t_{i-1}} )}\Big{|}\cF_{t_i}^X \vee \cF_{t_{i-1}} \right]\Big{|}\cF_s\right]\nonumber\\
&=&\IE^{\IR}[1|\cF_s]=1\nonumber
\end{eqnarray}
So for $t \in (t_j,t_{j+1})$

\begin{eqnarray}
\IE^{\IR}\left[\frac{d\IR^{\alpha,K}}{d\IR}\Big{|}\cF_t\right]&=&\IE^{\IR}\left[\prod_{i=1}^n A^{\alpha,K}_i\Big{|}\cF_t\right]=\IE^{\IR}\left[\IE^{\IR}\left[\prod_{i=1}^n A^{\alpha,K}_i\Big{|}\cF_{t_{n-1}}\right]\Big{|}\cF_t\right]\nonumber\\
&=&\IE^{\IR}\left[\prod_{i=1}^{n-1} A^{\alpha,K}_i\IE^{\IR}\left[ A^{\alpha,K}_n\Big{|}\cF_{t_{n-1}}\right]\Big{|}\cF_t\right]=\IE^{\IR}\left[\prod_{i=1}^{n-1} A^{\alpha,K}_i\Big{|}\cF_t\right]\nonumber\\
&=&\IE^{\IR}\left[\prod_{i=1}^{n-2} A^{\alpha,K}_i\Big{|}\cF_t\right]=\ldots=\IE^{\IR}\left[\prod_{i=1}^{j+1} A^{\alpha,K}_i\Big{|}\cF_t\right]\nonumber\\
&=&\prod_{i=1}^{j} A^{\alpha,K}_i\IE^{\IR}\left[A^{\alpha,K}_j\Big{|}\cF_t\right]=\prod_{i=1}^{j} A^{\alpha,K}_i
\end{eqnarray}
In the same way we get
\begin{eqnarray}
\IE^{\IR}\left[\frac{d\IR^{\alpha,K}}{d\IR}\Big{|}\cF_T^X \vee \cF_t\right]=\prod_{i=1}^{j} A^{\alpha,K}_i\nonumber,
\end{eqnarray}
i.e for every $t \in [0,T]$ we have
\begin{eqnarray}
\IE^{\IR}\left[\frac{d\IR^{\alpha,K}}{d\IR}\Big{|}\cF_T^X \vee \cF_t\right]=\IE^{\IR}\left[\frac{d\IR^{\alpha,K}}{d\IR}\Big{|}\cF_t\right]
\end{eqnarray}


Note that $\IR^{\alpha,K} \in\cP$ for $\alpha \in (0,1]$, because for $s_1<s_2$:
\begin{eqnarray}
\IE^{\IR^{\alpha,K}}[X_{s_2}|\cF_{s_1}]&=&\frac{1}{{\frac{d\IR^{\alpha,K}}{d\IR}}|_{\cF_{s_1}}} \IE^{\IR}\left[\frac{d\IR^{\alpha,K}}{d\IR}X_{s_2}\Big{|}\cF_{s_1}\right]\nonumber\\
&=&\frac{1}{{\frac{d\IR^{\alpha,K}}{d\IR}}|_{\cF_{s_1}}}\IE^{\IR}\left[\IE^{\IR}\left[\frac{d\IR^{\alpha,K}}{d\IR}X_{s_2}\Big{|}\cF^X_T \vee \cF_{s_1} \right]\Big{|}\cF_{s_1}\right]\nonumber\\
&=&\IE^{\IR}\left[X_{s_2}\frac{\IE^{\IR}\left[\frac{d\IR^{\alpha,K}}{d\IR}\Big{|}\cF^X_T \vee \cF_{s_1}\right]}{{\frac{d\IR^{\alpha,K}}{d\IR}}|_{\cF_{s_1}}}\Big{|}\cF_{s_1}\right]\nonumber\\
&=&\IE^{\IR}\left[X_{s_2}|\cF_{s_1}\right]=X_{s_1}\nonumber
\end{eqnarray}
For every random variable $K \in \cK^M$ we have an inequality:
\begin{eqnarray}
\IE^{\IR}\overline{M}&\geq& \sup_{\alpha \in (0,1]}\IE^{\IR^{\alpha,K}}M \geq \IE^{\IR^{0,K}}M=\IE^{\IR}\left[ M\prod_{i=1}^{n} A^{\alpha,K}_i\right]\nonumber\\
&=&\IE^{\IR}\left[ M\prod_{i=1}^{n-1} A^{\alpha,K}_i  \frac{1_{\{M\geq K\}}}{\IR(M\geq K|\cF^X_{t_i}\vee \cF_{t_{i-1}} )} \right] \nonumber\\
&\geq&\IE^{\IR}\left[ K\prod_{i=1}^{n-1} A^{\alpha,K}_i  \frac{1_{\{M\geq K\}}}{\IR(M\geq K|\cF^X_{t_i}\vee \cF_{t_{i-1}} )} \right] \nonumber\\
&=&\IE^{\IR}\left[ K\prod_{i=1}^{n} A^{\alpha,K}_i\right]= \IE^{\IR}\left[K \IE^{\IR}\left[ \prod_{i=1}^{n} A^{\alpha,K}_i\Big{|}\cF_T^X\right]\right]=\IE^\IR[K]  \nonumber
\end{eqnarray}
but from the other side $\sup_{K \in \cK^M}\IE^{\IR}K=\IE^{\IR} \overline{M}$. This implies
\begin{equation}
\sup_{(\alpha,K) \in (0,1] \times\cK^M }\IE^{\IR^{\alpha,K}}M=\IE^{\IR} \overline{M}.
\end{equation}
\end{proof}

\section{Acknowledgements}
This author's research was supported by the Ministry of Science and
Higher Education grant NCN 2011/01/B/HS4/00982.

\bibliographystyle{plain}
{\small\bibliography{PhD}}

\def\cprime{$'$} \def\cprime{$'$}
\begin{thebibliography}{10}

\bibitem{ap}
KK. Aase and SA. Persson.
\newblock Pricing of unit-lined life insurance policies.
\newblock {\em Scand. Actuar. J.}, (1):26--52, 1994.

\bibitem{b}
AR. Bacinello.
\newblock Fair pricing of life insurance participating policies with a minimum
  interest rate guaranteed.
\newblock {\em Astin Bull.}, 31(2):275--297, 2001.

\bibitem{bo}
AR. Bacinello and F.~Ortu.
\newblock Pricing equity-linked life insurance with endogenous minimum
  guarantees.
\newblock {\em Insurance Math. Econom.}, 12(3):245--257, 1993.

\bibitem{bh}
P.P. Boyle and M.R. Hardy.
\newblock Reserving for maturity guarantees: Two approaches.
\newblock {\em Insurance: Mathematics and Economics}, 21:113--27, 1997.

\bibitem{bs}
P.P. Boyle and E.S. Schwartz.
\newblock Equilibrium prices of guarantees under equity-linked contracts.
\newblock {\em Journal of Risk and Insurance}, 44:639--80, 1977.

\bibitem{bs2}
MJ. Brennan and ES. Schwartz.
\newblock The pricing of equity-linked life insurance policies with an asset
  value guarantee.
\newblock {\em Journal of Financial Economics}, (3):195--213, June 1976.

\bibitem{d}
F.~Delbaen.
\newblock Equity-linked policies.
\newblock {\em Bulletin Association Royal Actuaries Belges}, 80:33--52, 1986.

\bibitem{e}
S.~Ekern and S.A. Persson.
\newblock Exotic unit-linked life insurance contracts.
\newblock {\em The Geneva Papers on Risk and Insurance Theory}, 21:35--63,
  1996.

\bibitem{FL99}
H.~F{\"o}llmer and P.~Leukert.
\newblock Quantile hedging.
\newblock {\em Finance Stoch.}, 3(3):251--273, 1999.

\bibitem{FS2004}
H.~F{\"o}llmer and A.~Schied.
\newblock {\em Stochastic Finance. An Introduction in Discrete Time}.
\newblock 2004.

\bibitem{KluPal}
Przemyslaw Klusik and Zbigniew Palmowski.
\newblock Quantile hedging for equity-linked contracts.
\newblock {\em Insurance: Mathematics and Economics}, 48(2):280--286, March
  2011.

\bibitem{m8}
A.~Melnikov and Y.~Romanyuk.
\newblock Efficient hedging and pricing of equity-linked life insurance
  contracts on several risky assets.
\newblock {\em Int. J. Theor. Appl. Finance}, 11(3):295--323, 2008.

\bibitem{m2}
A.~V. Melnikov.
\newblock Efficient hedging of equity-linked life insurance policies.
\newblock {\em Dokl. Akad. Nauk}, 396(6):752--754, 2004.

\bibitem{m1}
A.~V. Melnikov.
\newblock Quantile hedging of equity-linked life insurance policies.
\newblock {\em Dokl. Akad. Nauk}, 396(5):601--603, 2004.

\bibitem{m7}
T.~Moeller.
\newblock Hedging equity-linked life insurance contracts.
\newblock {\em North American Actuarial Journal}, (5):79--95, 2001.

\bibitem{KluPalZwie}
J.~Zwierz P.~Klusik, Z.~Palmowski.
\newblock Quantile hedging for an insider.
\newblock {\em Probability and Mathematical Statistics}, 30(2), 2010.

\bibitem{sekine}
J.~Sekine.
\newblock Quantile hedging for defaultable securities in an incomplete market.
\newblock {\em S\=urikaisekikenky\=usho K\=oky\=uroku}, (1165):215--231, 2000.
\newblock Mathematical economics (Japanese) (Kyoto, 1999).

\bibitem{spivak}
Gennady Spivak and Jaksa Cvitanic.
\newblock Maximizing the probability of a perfect hedge.
\newblock {\em Ann. Appl. Probab.}, 09(04):1303--1328, 1999.

\end{thebibliography}
\end{document}